\def\be{\begin{equation}}
\def\ee{\end{equation}}
\def\bea{\begin{eqnarray}}
\def\eea{\end{eqnarray}}
\def\hsp5{\hspace{5mm}}
\theoremstyle{plain}
\newtheorem{theorem}{Theorem}[section]
\theoremstyle{remark}
\newtheorem{remark}{Remark}[section]
\newcommand{\sfrac}[2]{{\textstyle{#1\over#2}}}
\title{\sc Cosmological global dynamical systems analysis}
\begin{document}

\author{
\sc Artur Alho,$^{1}$\thanks{Electronic address:{\tt
aalho@math.ist.utl.pt}}\,, Woei Chet Lim,$^{2}$\thanks{Electronic address:{\tt
wclim@waikato.ac.nz}}\, and Claes Uggla,$^{3}$\thanks{Electronic
address:{\tt claes.uggla@kau.se}}\\
$^{1}${\small\em Center for Mathematical Analysis, Geometry and Dynamical Systems,}\\
{\small\em Instituto Superior T\'ecnico, Universidade de Lisboa,}\\
{\small\em Av. Rovisco Pais, 1049-001 Lisboa, Portugal.}\\
$^{2}${\small\em Department of Mathematics, University of Waikato,}\\
{\small\em Private Bag 3105, Hamilton 3240, New Zealand.}\\
$^{3}${\small\em Department of Physics, Karlstad University,}\\
{\small\em S-65188 Karlstad, Sweden.}}


\date{\normalsize{January 19, 2022}}
\maketitle

\begin{abstract}

We consider a dynamical systems formulation for models with an exponential
scalar field and matter with a linear equation of state in a spatially flat and isotropic
spacetime. In contrast to earlier work, which only considered linear hyperbolic fixed
point analysis, we do a center manifold analysis of the non-hyperbolic fixed points
associated with bifurcations. More importantly though, we construct monotonic functions
and a Dulac function. Together with the complete local fixed point analysis this leads to
proofs that describe the entire global dynamics of these models, thereby complementing 
previous local results in the literature.

\end{abstract}

\newpage

\section{Introduction\label{sec:intro}}

Dynamical systems and dynamical systems methods were introduced in
cosmology in 1971 by Collins~\cite{collins71} who treated 2-dimensional
dynamical systems while Bogoyavlensky and Novikov (1973) \cite{BN73}
used dynamical systems techniques for higher dimensional dynamical
systems in cosmology. This early work has subsequently
been followed up and extended by many researchers,
see e.g.~\cite{waiell97,col03,bahetal18}. The
first dynamical systems analysis involving a minimally coupled scalar field
was given by Belinski\v{\i} and
coworkers~\cite{beletal85a,beletal85b,belkha87,beletal88} who used dynamical systems
to explore inflation, primarily focusing on the potentials 
$V(\varphi) = \sfrac12 m^2\varphi^2$ and 
$V(\varphi)=\sfrac14 \lambda (\varphi^2-\varphi^2_0)^2$.
In 1987 Halliwell~\cite{hal87} treated an exponential scalar field potential
in Friedmann-Lema\^{i}tre-Robertson-Walker (FLRW) cosmology,
where the scalar field and an exponential representation of the
cosmological scale factor were used as dynamical systems variables,
which resulted in an unbounded state space and thereby only a local state
space description (see also Ratra and Peebles (1988)~\cite{ratpee88}
for early work using dynamical systems for this case).
The first global state space treatment of a scalar field with an exponential
potential was done in Bianchi cosmology, which contains FLRW cosmology
as a special case, by Coley \emph{et al.} (1997)~\cite{coletal97}.
Notable is also the work by Foster (1998)~\cite{fos98} who analysed asymptotically
exponential scalar field potentials. Finally,
Copeland \emph{et al.} (1998)~\cite{copetal98} treated the spatially flat
FLRW case with an exponential potential and a perfect fluid with a linear
equation of state. This latter work, which used a reduced 2-dimensional
compact and regular state space, gave a
linear fixed point (critical point, equilibrium point)
stability analysis for parameter values that yielded
hyperbolic fixed points.\footnote{All eigenvalues of a linearization
of a dynamical system at a so-called hyperbolic fixed point
have non-zero real parts, and hence such a fixed point has no center manifolds.}

In this paper we will investigate the models Copeland \emph{et al.} considered,
but we will extend their local fixed point analysis to the bifurcation
values of the relevant model parameters, which yield non-hyperbolic fixed
points with one zero eigenvalue, thereby requiring center manifold analysis.
More importantly though, even in the case of the hyperbolic fixed points a
linear fixed point analysis in a 2-dimensional compact state space does not
necessarily imply a complete asymptotic description; other asymptotic behaviour
is possible, such as periodic orbits and heteroclinic cycles.\footnote{A
heteroclinic orbit is a solution trajectory that originates and ends at
two different fixed points; a heteroclinic chain consists of a concatenation
of heteroclinic orbits, where the ending fixed point of one heteroclinic
orbit is the starting fixed point of the next one; a heteroclinic cycle
is a closed heteroclinic chain. For an example and detailed discussion of
a heteroclinic cycle arising from a scalar field potential in the spatially
flat FLRW case, see Foster (1998)~\cite{fos98b}.}
In this paper we fill these gaps in the proof for the asymptotic
and global behaviour of models with an exponential scalar field potential and a
perfect fluid with a linear equation of state.

We finally note two additional motivational points: First, a
substantial fraction of scalar field potentials used to describe inflation
or/and quintessence are asymptotically exponential when the scalar field
$\varphi\rightarrow+\infty$ or/and $\varphi\rightarrow-\infty$. A global
description of the solution space of such models therefore requires a global understanding of
the present models. Second, a key ingredient for several of the proofs in the present
paper are monotonic functions, which are \emph{derived} by using methods first
developed by Uggla in ch. 10 in~\cite{waiell97} and later generalized by
Uggla and coworkers in~\cite{heiugg10} and~\cite{heretal10}. The present
results thereby serve as an illustration of the power of those methods.

\section{Dynamical systems description and local fixed point analysis\label{sec:derivationdyn}}

\subsection{Field equations}

We consider a flat and isotropic FLRW
spacetime,
\begin{equation}
ds^2 = -dt^2 + a^2(t)\delta_{ij}dx^idx^j,
\end{equation}
where $a(t)$ is the cosmological scale factor. The source consists of
matter with an energy density $\rho_\mathrm{pf}>0$ and pressure
$p_\mathrm{pf}$, and a minimally coupled scalar field, $\varphi$, with a
potential $V(\varphi)>0$, which results in
\begin{equation}\label{rhophipphi}
\rho_\varphi = \frac12\dot{\varphi}^2 + V(\varphi),\qquad
p_\varphi = \frac12\dot{\varphi}^2 - V(\varphi).
\end{equation}

The Einstein equations, the (non-linear) Klein-Gordon
equation, and the energy conservation law for the fluid,
are given by\footnote{We use units such that $c=1$ and $8\pi G =1$,
where $c$ is the speed of light and $G$ is Newton's gravitational
constant.} (see, e.g., \cite{alhugg15b} for a slightly
different formulation of the equations)
\begin{subequations}\label{Mainsysdim}
\begin{align}
\dot{a} &= aH,\label{adotH}\\
\dot{H} + H^2 &= -\frac16(\rho + 3p), \label{Ray}\\
3H^2 &= \rho, \label{Gauss}\\
\ddot{\varphi} &=-3H\dot{\varphi} - V_{,\varphi}, \label{KG}\\
\dot{\rho}_\mathrm{pf} &= -3H\gamma\rho_\mathrm{pf},
\end{align}
\end{subequations}
where an overdot represents the derivative with respect to the cosmic
time $t$; a barotropic equation of state for the perfect fluid yields
$\gamma=\gamma(\rho_\mathrm{pf})$; the total energy density $\rho$ and
pressure $p$ are given by
\begin{equation}
\rho = \rho_\varphi + \rho_\mathrm{pf},\qquad
p = p_\varphi + p_\mathrm{pf}.
\end{equation}
Equation~\eqref{Ray} is the (Landau–-) Raychaudhuri equation,
while~\eqref{Gauss} is the Gauss/Hamiltonian constraint (often referred to
as the Friedmann equation in FLRW cosmology).
Here we are going to consider an exponential potential and a perfect fluid
with a linear equation of state, i.e.,
\begin{equation}\label{exppotlinjeqstate}
V = V_0e^{-\lambda\varphi},\qquad p_\mathrm{pf} = (\gamma-1)\rho_\mathrm{pf},
\end{equation}
where $\lambda$ and $\gamma$ are constants; matter, radiation and a stiff perfect fluid
correspond to $\gamma=1$, $\gamma=4/3$ and $\gamma=2$, respectively.
%
%

\subsection{Explicitly solvable cases}

For an exponential potential and a perfect fluid with linear equation of state
the equations are solvable for
several values of $\lambda$ and $\gamma$, as was shown implicitly
by Uggla {\it et al.} (1995)~\cite{uggetal95}
(use equations (2.23), (2.37), (4.98) and $\varphi=\sqrt{6}\beta^\dagger$
in Table III in~\cite{uggetal95}), where it was also demonstrated how
to obtain explicit solutions in as simple form as possible.\footnote{To the authors' knowledge,
all known explicit solutions for problems with hypersurface homogeneity, in general relativity
and modified gravity theories, are obtainable, and in their simplest form, by using the mechanisms
and methods in~\cite{uggetal95}.} The solvable cases are:
\begin{subequations}
\begin{alignat}{4}
\lambda &=0, &\qquad \gamma&=2, &\qquad & \\
\lambda &= \pm \sqrt{6}(\gamma-1), &\qquad \lambda &=\pm\left(\frac{10+\gamma}{2\sqrt{6}}\right),&\qquad \lambda &=\pm \sqrt{\frac32}\,\gamma,\\
\lambda &= \pm \sqrt{\frac32}(4-3\gamma), &\qquad \lambda &= \pm \left(\frac{4+\gamma}{\sqrt{6}}\right), &
\end{alignat}
\end{subequations}
where we recognize the first case, $\lambda=0$, as a constant potential
and the second case, $\gamma=2$, as a stiff perfect
fluid.\footnote{The explicitly solvable case $\gamma=1$, $\lambda=\sqrt{3/2}$
was used in~\cite{alhugg15b} to illustrate how explicit solutions
can be situated in a dynamical systems context.} A problem where either
the scalar field potential, $V$, or $\rho_\mathrm{pf}$ is zero yields a
trivially solvable problem, see~\cite{uggetal95}.

\subsection{The dynamical system}

To obtain a useful dynamical system, we introduce the following
\emph{dimensionless bounded} quantities\footnote{The variable
$\Sigma_\varphi$ was first introduced by Coley
\emph{et al.} (1997)~\cite{coletal97} and Copeland
\emph{et al.} (1998)~\cite{copetal98} whose $x$ is $\Sigma_\varphi$.
Since then, $\Sigma_\varphi$ (or $\varphi^\prime$)
is often used to describe scalar fields in cosmology,
see, e.g., Urena-Lopez (2012)~\cite{ure12}, equation (2.3),
Tsujikawa (2013)~\cite{tsu13}, equation (16) and Alho and Uggla
(2015)~\cite{alhugg15b}, equation (8). The reason for using the
notation $\Sigma$ for the kernel is because $\Sigma_\varphi$ plays
a similar role as Hubble-normalized shear, which is typically denoted
with the kernel $\Sigma$, see e.g.~\cite{waiell97}.}
\begin{subequations}\label{vardef1}
\begin{align}
\Sigma_\varphi &\equiv \frac{\dot{\varphi}}{\sqrt{6}H} = \frac{\varphi^\prime}{\sqrt{6}}, \label{Sigvarphidef}\\
\Omega_V &\equiv \frac{V}{3H^2},\label{OmVdef}\\
\Omega_\mathrm{pf} &\equiv \frac{\rho_\mathrm{pf}}{3H^2},\label{Ommdef}
\end{align}
\end{subequations}
A ${}^\prime$ henceforth denotes the derivative
with respect to $e$-fold time
\begin{equation}\label{Ndef}
N \equiv \ln{a/a_0},
\end{equation}
where $a_0 = a(t_0)$, $t=t_0 \Rightarrow N=0$. The definition~\eqref{Ndef}
implies that $N\rightarrow - \infty$ and $N\rightarrow + \infty$
when $a\rightarrow 0$ and $a \rightarrow\infty$, respectively.

Throughout we replace $t$ with $N$ by using that
\begin{equation}
\frac{d}{dt}=H\frac{d}{dN},\qquad \frac{d^2}{dt^2}=
H^2\left(\frac{d^2}{dN^2} - (1+q)\frac{d}{dN}\right),
\end{equation}
where
\begin{equation}\label{qdef1}
q \equiv -\frac{a\ddot{a}}{\dot{a}^2} = -1 - \frac{H^\prime}{H}
\end{equation}
is the \emph{deceleration parameter}.

Using $N$ and inserting~\eqref{exppotlinjeqstate} and the definitions~\eqref{vardef1}
into~\eqref{Mainsysdim} results in the following coupled system
for the state vector $(\Sigma_{\varphi},\Omega_\mathrm{pf})$:
\begin{subequations}\label{exppf}
\begin{align}
\Sigma_\varphi^\prime &= - (2-q)\Sigma_\varphi + \sqrt{\frac32}\,\lambda\Omega_V, \label{SigmaExpBound}\\
\Omega_\mathrm{pf}^\prime &= [2(1+q) - 3\gamma]\Omega_\mathrm{pf}, \label{OmegaExpBound}
\end{align}
\end{subequations}
where
\begin{subequations}
\begin{align}
\Omega_V &= 1 - \Sigma_\varphi^2 -  \Omega_\mathrm{pf},\label{constrGauss}\\
q &= - 1 + 3\Sigma_\varphi^2 + \frac32\,\gamma\Omega_\mathrm{pf}
= 2 - 3\Omega_V - \frac32\left(2 - \gamma\right)\Omega_\mathrm{pf}.\label{qpf1}
\end{align}
\end{subequations}
Restricting $\gamma$ to $\gamma\in(0,2)$, as we will do later, it follows that
$-1\leq q \leq 2$, where $q=-1$ when $\lambda=0$ and $\Sigma_\varphi=\Omega_\mathrm{pf}=0$,
while $q = 2$ when $\Sigma_\varphi=\pm 1$, $\Omega_\mathrm{pf}=0$. Another quantity that is often
used in the context of scalar field is $w_\varphi \equiv p_\varphi/\rho_\varphi$, or, equivalently,
$\gamma_\varphi$, defined by $p_\varphi = (\gamma_\varphi - 1)\rho_\varphi$ and hence
\begin{equation}
\gamma_\varphi \equiv \frac{p_\varphi + \rho_\varphi}{\rho_\varphi} = \frac{2\Sigma_\varphi^2}{1 - \Omega_\mathrm{pf}}.
\end{equation}

We use~\eqref{constrGauss} in~\eqref{exppf} to globally solve for $\Omega_V$,
although note that
\begin{equation}
\Omega_V^\prime = 2\left(1 + q - \sqrt{\frac32}\lambda\Sigma_\varphi\right)\Omega_V,
\end{equation}
which follows from~\eqref{constrGauss} and~\eqref{exppf}.
This equation and~\eqref{OmegaExpBound}
show that $\Omega_V=0$ and $\Omega_\mathrm{pf}=0$
form an invariant boundary of the state space $(\Sigma_\varphi,\Omega_\mathrm{pf})$,
which, due to that the dynamical system~\eqref{exppf} is completely regular, can be included in the
state space analysis. This is essential since some of the asymptotics
are associated with this boundary. We will refer to the orbits with $\Omega_V>0$, $\Omega_\mathrm{pf}>0$
as \emph{interior orbits} and orbits with $\Omega_V = 1 - \Sigma_\varphi^2 - \Omega_\mathrm{pf} = 0$
or/and $\Omega_\mathrm{pf}=0$,
as boundary orbits.

The present formulation
can be viewed as a transformation of an original state space $(H,\rho_\mathrm{pf}, \dot{\varphi}, \varphi)$
(alternatively, $(H,a, \dot{\varphi}, \varphi)$, since $\rho_\mathrm{pf} \propto a^{-3\gamma}$)
to $(H,\varphi, \Sigma_\varphi, \Omega_\mathrm{pf})$.
The equations for $H$ and $\varphi$, $H^\prime = - (1+q)H$ and $\varphi^\prime = \sqrt{6}\Sigma_\varphi$,
decouple from the dynamical
system for $(\Sigma_\varphi,\Omega_\mathrm{pf})$. This reduced state space can be therefore be
regarded as a projection of the state
space $(H,\varphi, \Sigma_\varphi, \Omega_\mathrm{pf})$.\footnote{More precisely, the new variables result in a
skew-product dynamical system where the base dynamics acts in $(\Sigma_\varphi,\Omega_\mathrm{pf})$
while the fiber dynamics acts in $(H,\varphi)$, a notion that was introduced 
in~\cite{anz51}.} The reason for the decoupling of $H$ and $\varphi$
is due to the linear equation of state for the perfect fluid and that
$- V_{,\varphi}/V = \lambda = \mathrm{constant}$.\footnote{The present system is closely connected to that of
Copeland \emph{et al.} (1998)~\cite{copetal98} who used $x=\Sigma_\varphi$ and $y=\sqrt{\Omega_V}$ as variables.
We prefer to use the more physical variable $\Omega_\mathrm{pf}$ rather than $y$. Moreover,
note that in contrast to $\Omega_\mathrm{pf}$, the unfortunately widely used $y$ is unsuitable for many
more general potentials. For examples where $y$ is inappropriate and for proper choices of variables, see,
e.g.,~\cite{alhugg15,alhetal15,alhugg17,alhetal21}.} Since the decoupled
equations can be solved by quadratures once $\Sigma_\varphi(N)$ and $\Omega_\mathrm{pf}(N)$ are obtained,
the system for the state vector $(\Sigma_\varphi,\Omega_\mathrm{pf})$ contains the essential
information for the present problem.

\subsection{Local hyperbolic fixed point analysis}

The fixed points of the dynamical system~\eqref{exppf} and the eigenvalues
of the linearization at the fixed points
are given in Table~\ref{tab:fixed points1}.
\begin{table}
\begin{center}
\begin{tabular}{|m{1cm}|m{1.5cm}|m{1cm}|m{1cm}|m{1.8cm}|m{4.5cm}|}\hline
Name & $\Sigma_\varphi$ & $\Omega_\mathrm{pf}$ & $\gamma_\varphi$ & $q$ & Eigenvalues \\ \hline
$\mathrm{K}_+$ & 1 & 0 & $2$ & $2$ &
$3(2-\gamma)$;\quad $\sqrt{6}(\sqrt{6}-\lambda)$\\ \hline
$\mathrm{K}_-$ & $- 1$ & 0 & $2$ & $2$ &
$3(2-\gamma)$;\quad $\sqrt{6}(\sqrt{6}+\lambda)$\\ \hline
$\mathrm{P}/\mathrm{dS}$ & $\frac{\lambda}{\sqrt{6}}$ & 0 & $\frac{\lambda^2}{3}$ & $\sfrac12(\lambda^2-2)$ &
$-\sfrac12(6-\lambda^2)$;\quad $-(3\gamma-\lambda^2)$\\ \hline
$\mathrm{FL}$ & 0 & 1 & $-$ & $\sfrac12(3\gamma-2)$ & $3\gamma$;\quad $-\sfrac32(2-\gamma)$\\ \hline
$\mathrm{S}$ & $\sqrt{\frac32}\left(\frac{\gamma}{\lambda}\right)$ & $1 - \frac{3\gamma}{\lambda^2}$ & $\gamma$ & $\sfrac12(3\gamma-2)$ &
$-\sfrac34(2-\gamma)(1 \pm \sqrt{r}\,)$\\ \hline
\end{tabular}
\caption{\label{tab:fixed points1} Fixed points and their effective scalar field equation of state $\gamma_\varphi$;
deceleration parameter $q$; their eigenvalues, where
$r \equiv 1 - \frac{8\gamma(\lambda^2-3\gamma)}{\lambda^2(2-\gamma)} = \frac{24\gamma^2 - (9\gamma-2)\lambda^2}{\lambda^2(2-\gamma)}$.}
\end{center}
\end{table}

The names of the fixed points are motivated as follows:
$\mathrm{K}_\pm$ are the boundary `kinaton' fixed points, due to that $\Omega_V=0$
and $\Omega_\mathrm{pf}=0$ and hence that
$H^\prime = -(1+q)H= -3H \Rightarrow
3H^2 = \rho =\rho_\varphi = 3H_0^2\exp(-6N) = 3H_0^2(a_0/a)^6$,
which characterizes kinaton evolution (a nomenclature
introduced in~\cite{joypro98}); $\mathrm{dS}$ with $\lambda=0$ and $q=-1$
is the de Sitter fixed point while $\mathrm{P}$, which exists when $0 < \lambda^2 < 6$
and yields power law acceleration when $\lambda^2 < 2$, due to that
$q=(\lambda^2 - 2)/2$ at $\mathrm{P}$; the fixed point $\mathrm{FL}$
is referred to as the Friedmann-Lema\^{i}tre fixed point ($\Omega_\mathrm{pf}=1$);
finally $\mathrm{S}$ is the scaling fixed point
(scaling due that $\gamma_\varphi = \gamma$ at $\mathrm{S}$, since this implies 
that the scalar field mimics the dynamics of the fluid, with a
constant ratio between both energy densities).

Apart from the de Sitter
fixed point $\mathrm{dS}$, which exists when $\lambda=0$, each of the other fixed points
correspond to a unique self-similar (i.e., the corresponding spacetime
admits a homothetic Killing vector field) power law solution, invariant under constant
conformal scalings. On the other hand, the de Sitter fixed point $\mathrm{dS}$
corresponds to a one-parameter set of solutions, parametrized by the dimensional
constant $V_0=\Lambda=3H_0^2$.

Without loss of generality, we will assume that $\lambda\geq 0$
(if $\lambda<0$, make the change $\varphi\rightarrow -\varphi$).
We also limit the range of $\gamma$ so that $\gamma\in(0,2)$
where $\gamma=0$ and $\gamma=2$ yield bifurcations, which is not surprising
since $\gamma=0$ results in a cosmological constant while a stiff fluid
equation of state, $\gamma=2$, corresponds to that the speed of sound is equal
to that of light (also, recall that $\gamma= 2$ is an explicitly solvable case).
The above eigenvalues then yield the stability properties
given in Table~\ref{tab:fixed points2}.
\begin{table}
\begin{center}
\begin{tabular}{|m{1cm}|m{2cm}|m{6cm}|}\hline
Name & Domain & Stability \\ \hline
$\mathrm{K}_+$ & $\lambda \geq 0$ &
Unstable node when $\lambda < \sqrt{6}$ \newline Saddle point for $\lambda>\sqrt{6}$\\ \hline
$\mathrm{K}_-$ & $\lambda \geq 0$ &
Unstable node
\\ \hline
$\mathrm{P}/\mathrm{dS}$ & $0 \leq \lambda^2<6$ &
Stable node when $\lambda^2 < 3\gamma$ \newline Saddle point for $3\gamma < \lambda^2 < 6$\\ \hline
$\mathrm{FL}$ & $\lambda\geq 0$ & Saddle point\\ \hline
$\mathrm{S}$ & $\lambda^2>3\gamma$ &
Stable node when $3\gamma<\lambda^2<\frac{24\gamma^2}{9\gamma-2}$
\newline Stable spiral for $\lambda^2>\frac{24\gamma^2}{9\gamma-2}$\\ \hline
\end{tabular}
\caption{\label{tab:fixed points2} Fixed points and their stability; $\gamma\in(0,2)$.}
\end{center}
\end{table}

It follows that there are three disjoint parameter regions when $\gamma\in(0,2)$, $\lambda \geq 0$, determined by
the two bifurcations at $\gamma = \lambda^2/3$ and $\lambda = \sqrt{6}$, see Figure~\ref{fig:Bifurcation}:
\begin{itemize}
\item[I:] $\gamma>\lambda^2/3$.
\item[II:] $\gamma<\lambda^2/3$, $\lambda<\sqrt{6}$.
\item[III:] $\lambda>\sqrt{6}$.
\end{itemize}

\begin{figure}[ht!]
	\begin{center}
		\includegraphics[width=0.4\textwidth, trim = 0cm 0cm 0cm 0cm ]{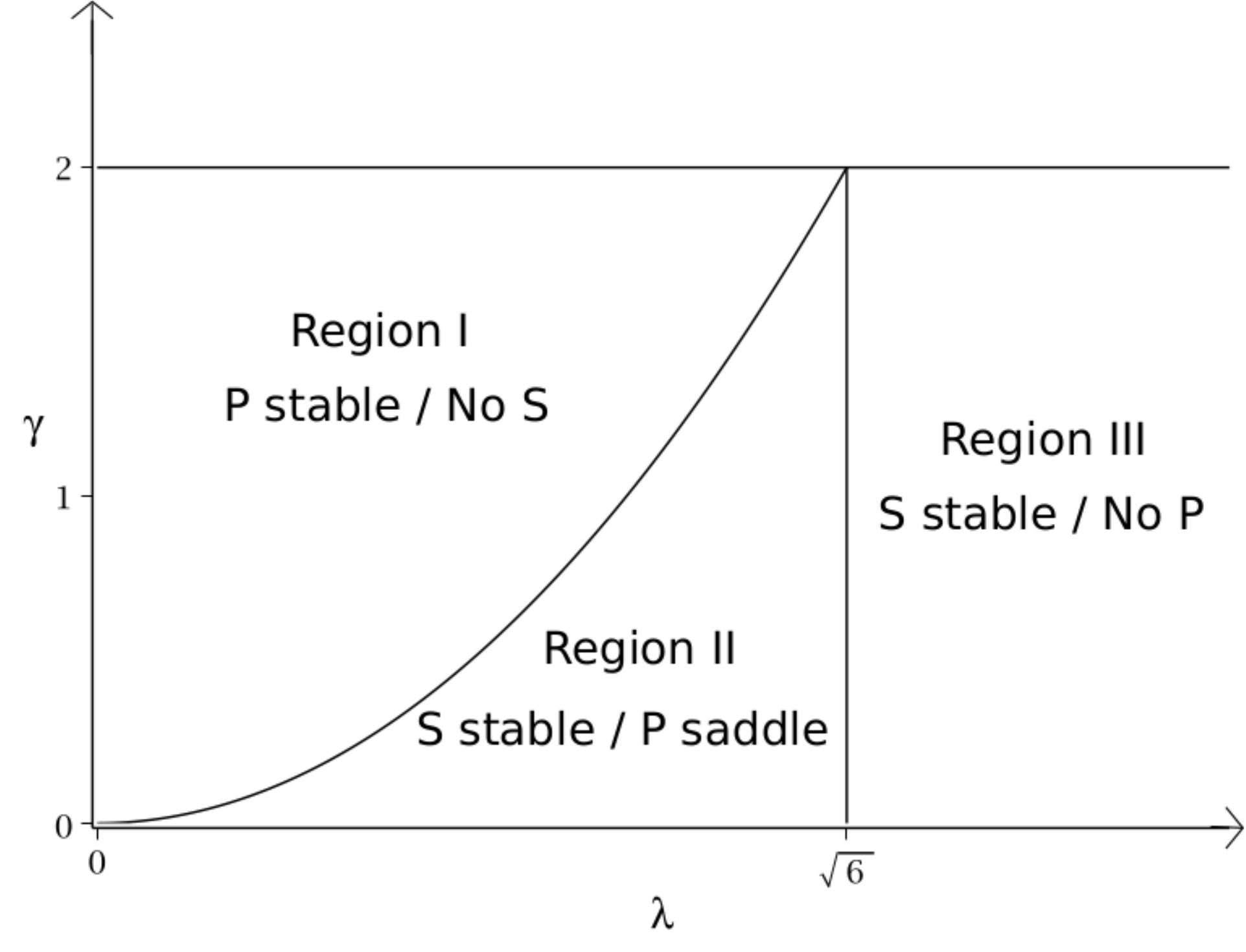}
		\vspace{-1.0cm}
	\end{center}
	\caption{Bifurcation diagram $(\gamma,\lambda)$.}
	\label{fig:Bifurcation}
\end{figure}

The $\lambda=0$ boundary of region I, where the stable
fixed point $\mathrm{P}$ is replaced with the stable fixed point
$\mathrm{dS}$ is, as mentioned, completely solvable. The solution
in the state space $(\Sigma_\varphi,\Omega_\mathrm{pf})$ can be obtained as follows.
In this case, due to that $\lambda=0$, the system~\eqref{exppf} is invariant
under $\Sigma_\varphi \rightarrow - \Sigma_\varphi$. Hence $\Sigma_\varphi$
can be replaced with $\Sigma_\varphi^2 = \Omega_\mathrm{stiff}$.
This results in a system that is identical to that for
a source with three matter components: (i) a
perfect fluid with $p_\mathrm{pf} = (\gamma_\mathrm{pf}-1)\rho_\mathrm{pf}$, (ii) a
stiff fluid, i.e., a perfect fluid with an equation of state
$p_\mathrm{stiff} = \rho_\mathrm{stiff}$ (and hence
$\rho_\mathrm{stiff}^\prime = -6\rho_\mathrm{stiff}$), and (iii) a cosmological constant,
$\Lambda = V$. This problem easily yields the solution
\begin{equation}\label{stiffmV}
(\Omega_\mathrm{stiff},\Omega_\mathrm{pf}) =
\frac{\left(\Omega_{\mathrm{stiff},0}e^{-6N},\Omega_{\mathrm{pf},0}e^{-3\gamma N}\right)}
{\Omega_{\mathrm{stiff},0}e^{-6N} + \Omega_{\mathrm{pf},0}e^{-3\gamma N} + \Omega_{V,0}},
\end{equation}
while $\Sigma_\varphi = \pm \sqrt{\Omega_\mathrm{stiff}}$ and
$\Omega_V = 1 - \Omega_\mathrm{stiff} - \Omega_\mathrm{pf}$.
The invariant subset $\Sigma_\varphi=0$,
and hence $\Omega_\mathrm{stiff}=0$, corresponds to having a perfect
fluid with a linear equation of state and a cosmological constant, where
$\gamma = 1$ yields the $\Lambda$CDM model.
For a visual representation of the orbit structure for the
$\lambda=0$ models, see Figure~\ref{fig:lambda0}.
\begin{figure}[ht!]
\begin{center}
	\includegraphics[width=0.4\textwidth, trim = 0cm 2cm 0cm 0cm ]{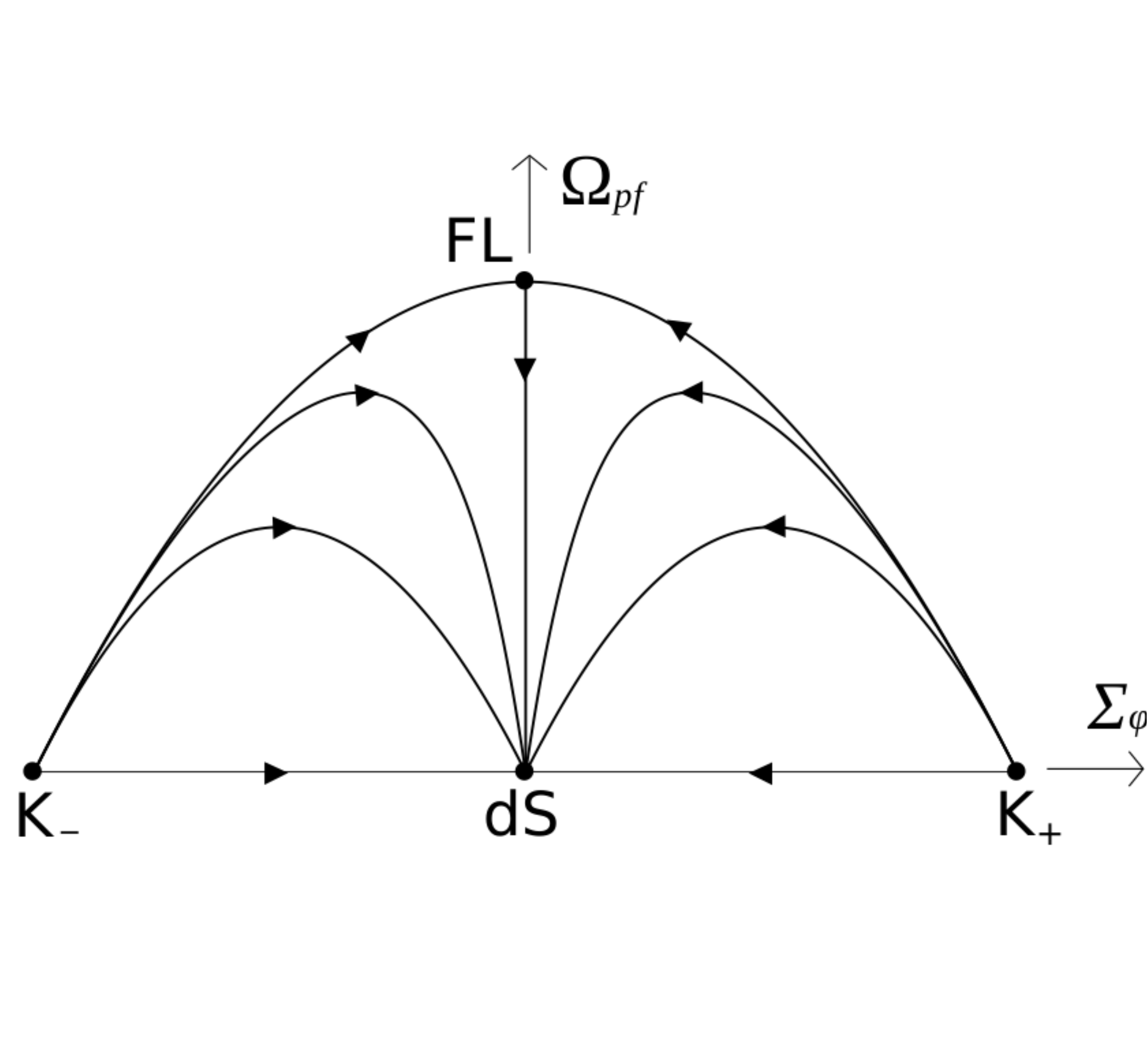}
	\vspace{-1.0cm}
\end{center}
\caption{Orbit structure for a perfect fluid and a constant scalar field potential and hence $\lambda=0$.
The heteroclinic orbit $\mathrm{FL}\rightarrow\mathrm{dS}$, which is the unstable manifold of
$\mathrm{FL}$ and also a separatrix
in the state space $(\Sigma_\varphi,\Omega_\mathrm{pf})$, yields the $\Lambda$CDM model
when $\gamma=1$.}
\label{fig:lambda0}
\end{figure}

The bifurcation boundary between region I and II at $\gamma=\lambda^2/3$ corresponds to
that $\mathrm{S}$ enters the state space through $\mathrm{P}$
and takes over as the stable sink when $\gamma<\lambda^2/3$
instead of $\mathrm{P}$, which is a stable sink in region I and a saddle with one orbit
entering the interior of the state space in region II.
The bifurcation boundary between regions II and III at $\lambda=\sqrt{6}$
corresponds to that $\mathrm{P}$ leaves the state space through $\mathrm{K}_+$,
where the latter is transformed from a source  (unstable node) to a saddle,
with no orbits entering the interior state space.
Finally, $\mathrm{K}_-$ is a source for all regions (and their bifurcation boundaries),
while $\mathrm{FL}$ is a saddle, with a single orbit entering the interior
state space, where $\mathrm{S}$ coalesce with $\mathrm{FL}$
in the limit $\lambda\rightarrow\infty$.

\subsection{Center manifold analysis of the non-hyperbolic fixed points}

Locally it remains to establish what happens near the fixed points
$\mathrm{P}/\mathrm{S}$ and $\mathrm{K}_+/\mathrm{P}$
at the bifurcation values $\gamma=\lambda^2/3$ and $\lambda=\sqrt{6}$, respectively.
In these cases one of the eigenvalues is zero. To establish what is happening
locally therefore requires a center manifold analysis.

The linearisation around $\mathrm{P}/\mathrm{S}$ yields the tangent spaces
\begin{subequations}\label{TSpacesPS}
\begin{align}
E^s &= \{(\Sigma_\varphi,\Omega_\mathrm{pf})|\, \Omega_\mathrm{pf} = 0\},\\
E^c &= \left\{(\Sigma_\varphi,\Omega_\mathrm{pf})|\, (\Sigma_\varphi-\sfrac{\lambda}{\sqrt{6}}) + \sfrac{\lambda}{\sqrt{6}}\,\Omega_\mathrm{pf} =0\right\},\label{Ec}
\end{align}
\end{subequations}
where $E^s$ is the stable tangent space spanned by the eigenvector associated
with the negative eigenvalue $-\sfrac12(6-\lambda^2)$, with the $\Omega_\mathrm{pf}=0$
axis being the invariant stable manifold $W^s$, while $E^c$ is the center tangent space
associated with the zero eigenvalue.
To study the stability associated with the zero eigenvalue we make use of the center
manifold reduction theorem (see e.g.~\cite{per00} section 2.12). Adapting the variables to $E^s$ and $E^c$
according to
%
\begin{equation}
(u,v)=\left((\Sigma_\varphi-\sfrac{\lambda}{\sqrt{6}})
+ \sfrac{\lambda}{\sqrt{6}}\,\Omega_\mathrm{pf},\Omega_\mathrm{pf}\right)
\end{equation}
result in that~\eqref{exppf} yields a dynamical system on the form
\begin{subequations}\label{backgroundeqcuv}
	\begin{align}
	u^\prime = -(3-\sfrac{\lambda^2}{2})u+	\mathcal{O}(\|(u,v)\|^2),\qquad
v^\prime = \mathcal{O}(\|(u,v)\|^2)
	\end{align}
\end{subequations}
%
where the fixed point $\mathrm{P}/\mathrm{S}$ is located at the origin $(u,v)=(0,0)$.
The analytical 1-dimensional center manifold $W^c$ can be represented locally as the
graph $h:\, E^c\rightarrow E^s$, where $u = h(v)$ satisfies the fixed point and tangency
conditions $h(0) =0$ and $\left.\frac{dh}{dv}\right|_{v=0} = 0$, respectively.
Inserting $u=h(v)$ into~\eqref{backgroundeqcuv} and using $v$ as the independent variable leads to
\begin{equation}\label{hv}
\left[2(1 + q) - \lambda^2\right]v\left(\frac{dh}{dv} -\frac{\lambda}{\sqrt{6}}\right) +
(2-q)g(v) - \sqrt{\frac32}\,\lambda \left(1 - v - g^2(v)\right) = 0,
\end{equation}
where
\begin{equation}
g(v)\equiv \frac{\lambda}{\sqrt{6}}(1-v) + h(v), \qquad
q=-1 + 3g^2(v) + \frac{\lambda^2}{2}\,v.
\end{equation}
This equation can be solved approximately by representing $h(v)$ as the formal power series
\begin{equation}
h(v) = \sum_{i=2}^n a_iv^i + {\cal O}(v^{n+1}) \qquad \text{as}\qquad v\rightarrow 0.
\end{equation}
Inserting
the above into~\eqref{hv} and solving algebraically for the coefficients leads to that on the center manifold
\begin{equation}
v^\prime = -\lambda^2 v^2+	\mathcal{O}(v^3) \qquad \text{as}\qquad v\rightarrow 0,
\end{equation}
which shows that the fixed point $\mathrm{P}/\mathrm{S}$ is stable. Thus a
1-parameter family of orbits converge to $\mathrm{P}/\mathrm{S}$ as $N\rightarrow+\infty$,
tangentially to the center subspace $E^c$. In terms of the original state-space
variables $(\Sigma_\varphi,\Omega_\mathrm{pf})$ the analytical center manifold expansion gives
\begin{equation}\label{SigmaCMho}
\begin{split}
\Sigma_\varphi &= \frac{\lambda}{\sqrt{6}}\left(1-\Omega_\mathrm{pf} + \frac{\lambda^2}{\lambda^2-6}\Omega^2_\mathrm{pf} + \mathcal{O}(\Omega^3_\mathrm{pf})\right)\qquad \text{as}\qquad \Omega_\mathrm{pf}\rightarrow 0.
\end{split}
\end{equation}

It remains to analyze the bifurcation fixed point $\mathrm{K}_+/\mathrm{P}$.
This turns out to be trivial since the center manifold of $\mathrm{K}_+/\mathrm{P}$
is the $\Omega_\mathrm{pf}=0$ axis, which is also a 1-dimensional unstable manifold of $\mathrm{K}_-$.
It follows that $\mathrm{K}_+/\mathrm{P}$ is a saddle, for which no interior orbits converge to
or originate from.

\section{Global dynamical systems analysis\label{app:exppot}}

In this section we first present monotonic functions that completely determine
the global solution structure of the present models. We then give an alternative
proof for the global dynamics by means of a Dulac function.

%
%


\subsection{Monotonic functions}

We here use the methods developed in ch. 10 in~\cite{waiell97} and then
generalized in~\cite{heiugg10} and~\cite{heretal10} to derive monotonic functions
that determine the global behaviour of the solution space of the present set of models.
The fixed point $\mathrm{P}/\mathrm{dS}$ is stable when $\gamma > \lambda^2/3$ (region I)
while the fixed point $\mathrm{S}$ is stable when $\gamma < \lambda^2/3$ (regions II and III and their mutual
boundary at $\lambda=\sqrt{6}$). For each of these two cases, and for $\gamma = \lambda^2/3$, there exists
a monotonic function associated with the stable fixed point.

We begin with the first case, i.e., region I where $\gamma > \lambda^2/3$ and $\mathrm{P}/\mathrm{dS}$ is
stable. We also include the bifurcation boundary $\lambda^2=3\gamma$ where $\mathrm{S}$ enter the state
space at $\mathrm{P}$. Below we will prove the following theorem:

\begin{theorem}\label{TheoremI}
{\bf Global interior dynamics when $\gamma \geq \lambda^2/3$.}
\begin{enumerate}
\item[(i)] All interior orbits end at $\mathrm{P}$ ($\mathrm{dS}$ when $\lambda=0$, $\mathrm{P}/\mathrm{S}$ when $\gamma = \lambda^2/3$).
\item[(ii)] A single interior orbit originates from $\mathrm{FL}$.
\item[(iii)] All remaining interior orbits originate from $\mathrm{K}_-$
and $\mathrm{K}_+$, each being a source for a 1-parameter set of interior orbits.
\end{enumerate}
\end{theorem}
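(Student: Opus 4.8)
\section*{Proof proposal for Theorem~\ref{TheoremI}}

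The plan is to combine the local data of Tables~\ref{tab:fixed points1}--\ref{tab:fixed points2} with a global monotonicity argument. The state space $\{(\Sigma_\varphi,\Omega_\mathrm{pf}):\Omega_\mathrm{pf}\geq 0,\ \Sigma_\varphi^2+\Omega_\mathrm{pf}\leq 1\}$ is compact and its boundary pieces $\Omega_\mathrm{pf}=0$ and $\Omega_V=0$ are invariant, so every interior orbit has non-empty compact $\alpha$- and $\omega$-limit sets, and the only fixed points available to them are $\mathrm{K}_\pm$, $\mathrm{P}$ and $\mathrm{FL}$ (in region~I the scaling point $\mathrm{S}$ has $\Omega_\mathrm{pf}<0$ and has left the physical state space, coinciding with $\mathrm{P}$ exactly on $\gamma=\lambda^2/3$). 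The theorem is therefore equivalent to showing that, for interior orbits, $\omega=\{\mathrm{P}\}$ always, while $\alpha\subseteq\{\mathrm{K}_+,\mathrm{K}_-,\mathrm{FL}\}$ with $\mathrm{FL}$ realised by exactly one orbit.

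First I would settle the boundary dynamics, which is genuinely one-dimensional on each invariant piece. On $\Omega_\mathrm{pf}=0$ the system reduces to a single cubic equation for $\Sigma_\varphi$ with fixed points $\mathrm{K}_\pm$ ($\Sigma_\varphi=\pm1$) and $\mathrm{P}$ ($\Sigma_\varphi=\lambda/\sqrt6\in(-1,1)$); monotonicity between consecutive roots yields the heteroclinic orbits $\mathrm{K}_\pm\to\mathrm{P}$. On the arc $\Omega_V=0$ one finds $\Sigma_\varphi^\prime=-(3-\sfrac32\gamma)(1-\Sigma_\varphi^2)\Sigma_\varphi$, so the flow runs from $\mathrm{K}_\pm$ into $\mathrm{FL}$ ($\Sigma_\varphi=0$), with the linearisation at $\mathrm{FL}$ reproducing the arc-tangential eigenvalue $-\sfrac32(2-\gamma)$. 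Hence the boundary is a heteroclinic chain carrying \emph{no} cycle ($\mathrm{P}$ is a sink and $\mathrm{FL}$ emits nothing along the boundary), and the remaining transverse eigenvalue $3\gamma>0$ of $\mathrm{FL}$ supplies exactly one unstable orbit leaving $\mathrm{FL}$ into the interior, which is the orbit of item~(ii).

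The core step is the interior monotone function. Following the construction method of~\cite{waiell97,heiugg10,heretal10}, I would seek a function built from $\Omega_V$, $\Omega_\mathrm{pf}$ and $1\mp\Sigma_\varphi$ whose logarithmic derivative, using $\Omega_V^\prime/\Omega_V=2(1+q)-\sqrt6\,\lambda\Sigma_\varphi$ and $\Omega_\mathrm{pf}^\prime/\Omega_\mathrm{pf}=2(1+q)-3\gamma$, collapses to a single sign-definite expression on $\gamma\geq\lambda^2/3$. The naive candidate $\Omega_V/\Omega_\mathrm{pf}$ already gives $3\gamma-\sqrt6\,\lambda\Sigma_\varphi$, which is sign-indefinite because its zero $\Sigma_\varphi=\sqrt{3/2}\,\gamma/\lambda$ (the $\mathrm{S}$-value) can fall inside the physical range $(-1,1)$; the real work is to add $\Sigma_\varphi$-dependent factors that cancel this indefiniteness throughout region~I. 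Granting such an $M$ with, say, $M^\prime<0$ strictly on the open interior, the monotonicity principle forces the $\alpha$- and $\omega$-limit sets of every interior orbit onto the boundary, at the points where $M$ approaches its supremum and infimum respectively. Evaluating these boundary limits should identify the infimum set with $\mathrm{P}$ and the supremum set with $\{\mathrm{K}_\pm,\mathrm{FL}\}$, giving $\omega=\{\mathrm{P}\}$ and $\alpha\subseteq\{\mathrm{K}_+,\mathrm{K}_-,\mathrm{FL}\}$. Since $\mathrm{K}_\pm$ are unstable nodes (two-dimensional sources) in region~I, each is the past endpoint of a one-parameter family of interior orbits, while the single orbit with $\alpha=\mathrm{FL}$ is the unstable manifold found above; this yields (iii) and completes (i)--(ii), with $\mathrm{P}$ reinterpreted as $\mathrm{dS}$ for $\lambda=0$ and as $\mathrm{P}/\mathrm{S}$ on $\gamma=\lambda^2/3$ (the latter endpoint being the center-manifold sink established above).

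I expect the construction and sign verification of $M$ to be the main obstacle: one must exhibit explicit exponents and factors and prove $M^\prime$ has a fixed sign uniformly over the whole region $\gamma\geq\lambda^2/3$, $\lambda\geq0$, and then correctly read off its boundary limiting values to pin down $\omega=\{\mathrm{P}\}$ rather than merely ``the boundary''. As a cleaner but less informative cross-check I would also produce a Dulac function $B(\Sigma_\varphi,\Omega_\mathrm{pf})$ with $\partial_{\Sigma_\varphi}(B\,\Sigma_\varphi^\prime)+\partial_{\Omega_\mathrm{pf}}(B\,\Omega_\mathrm{pf}^\prime)$ of one sign on the simply connected interior: this excludes periodic orbits, and since the boundary carries no heteroclinic cycle, the Poincar\'e--Bendixson theorem together with the local stability of Table~\ref{tab:fixed points2} forces the same global phase portrait.
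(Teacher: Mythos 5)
Your overall strategy --- invariant compact state space, one-dimensional boundary flows, a strictly decreasing interior function plus the monotonicity principle, with the Dulac/Poincar\'e--Bendixson argument as a cross-check --- is exactly the architecture of the paper's proof, and your boundary computations (the cubic $\Sigma_\varphi^\prime=-3(1-\Sigma_\varphi^2)(\Sigma_\varphi-\lambda/\sqrt6)$ on $\Omega_\mathrm{pf}=0$ and $\Sigma_\varphi^\prime=-(3-\sfrac32\gamma)(1-\Sigma_\varphi^2)\Sigma_\varphi$ on $\Omega_V=0$) are correct. However, there is a genuine gap: the entire analytic content of the theorem is the existence of the monotone function, and you never produce it --- your argument is explicitly conditional (``Granting such an $M$\dots''). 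The paper exhibits $M=(1-u\Sigma_\varphi)^2\,\Omega_V^{-1}$ with $u=\lambda/\sqrt6$, for which
\begin{equation*}
M^\prime=-\left(\frac{6(\Sigma_\varphi-u)^2+(3\gamma-\lambda^2)\Omega_\mathrm{pf}}{1-u\Sigma_\varphi}\right)M ,
\end{equation*}
strictly negative on the interior when $\gamma>\lambda^2/3$, with minimum $1-u^2$ attained at $\mathrm{P}$ and $M\rightarrow\infty$ forcing $\Omega_V\rightarrow 0$ toward the past. Note that the successful candidate involves \emph{only} $\Omega_V$ and the factor $1-u\Sigma_\varphi$ (no $\Omega_\mathrm{pf}$ power), which is not obviously reachable from the ansatz family you sketch; without the explicit function and its sign verification the proof does not close.

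A second, smaller gap: you assume $M^\prime<0$ strictly on the open interior ``uniformly over the whole region $\gamma\geq\lambda^2/3$'', but at the bifurcation value $\gamma=\lambda^2/3$ the paper's $M$ satisfies only $M^\prime\leq 0$, vanishing on the line $\Sigma_\varphi=u$. The paper must then separately show $M''|_{\Sigma_\varphi=u}=0$ and $M'''|_{\Sigma_\varphi=u}<0$ for $\Omega_\mathrm{pf}>0$, so that interior orbits merely pass through inflection points and no interior invariant set sits on that line; only then does the monotonicity (or LaSalle-type) argument apply. Your plan silently skips this degenerate case even though the theorem statement includes $\gamma=\lambda^2/3$. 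Everything else --- the identification of the single $\mathrm{FL}$ orbit with the one-dimensional unstable manifold, the $1$-parameter families from the sources $\mathrm{K}_\pm$, and the center-manifold endpoint $\mathrm{P}/\mathrm{S}$ --- matches the paper.
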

\begin{remark}
Recall that $\Omega_V>0$ and $\Omega_\mathrm{pf}>0$ for interior orbits.
For a visual representation of the global orbit structure, see Figure~\ref{fig:globalI}.
\end{remark}
\begin{proof}
Using the methods in~\cite{heiugg10,heretal10} we derive
the following monotonic function for region I:
\begin{equation}
M \equiv (1-u\Sigma_\varphi)^2\Omega_V^{-1},\qquad u \equiv \frac{\lambda}{\sqrt{6}}<1,
\end{equation}
which is strictly monotonically decreasing for all interior orbits since
\begin{equation}
M^\prime = -\left(\frac{6(\Sigma_\varphi-u)^2 + (3\gamma-\lambda^2)\Omega_\mathrm{pf}}{1-u\Sigma_\varphi}\right)M <0,
\end{equation}
where $M$ takes its minimum value, $M=1-u^2$, at $\mathrm{P}/\mathrm{dS}$.
Thus all interior orbits end at $\mathrm{P}/\mathrm{dS}$.
Going backwards in time $M\rightarrow\infty$, which implies that $\Omega_V\rightarrow 0$. Taking the boundary structure
into account together with the local fixed point analysis shows that one orbit originates from $\mathrm{FL}$
while all other interior orbits originate from the fixed points
$\mathrm{K}_-$ and $\mathrm{K}_+$.

The bifurcation value $\gamma = \lambda^2/3$ yields
\begin{subequations}
\begin{align}
M^\prime &= -\left(\frac{6(\Sigma_\varphi-u)^2}{1-u\Sigma_\varphi}\right)M \leq 0,\\
M''|_{\Sigma_\varphi=u} &= 0,\\
M'''|_{\Sigma_\varphi=u} &= -108\Omega_\mathrm{pf}^2u^2(1-u^2)^3\Omega_V^{-1},\qquad \Omega_V = 1 - u^2 - \Omega_\mathrm{pf}.\label{Mppp}
\end{align}
\end{subequations}
As a consequence $M^\prime<0$ when $\Sigma_\varphi\neq u$. Furthermore, as follows from~\eqref{Mppp},
when $\Sigma_\varphi = u$ interior orbits only go through an inflection point since $\Omega_\mathrm{pf}>0$
(i.e., there is no invariant interior set with $\Omega_\mathrm{pf}>0$, $\Sigma_\varphi = u$). Thus $M\rightarrow 1-u^2$
toward the future with the limit at $\mathrm{P}/\mathrm{S}$, while $M\rightarrow\infty$ and hence $\Omega_V\rightarrow 0$ toward the past
also in this case. Combining this with the boundary structure and the local analysis of the hyperbolic fixed points $\mathrm{FL}$,
$\mathrm{K}_\pm$ and the center manifold analysis of $\mathrm{P}/\mathrm{S}$ yield the same result as
when $\gamma>\lambda^2/3$, which concludes the proof of Theorem~\ref{TheoremI}.
\end{proof}
\begin{theorem}\label{TheoremII}
{\bf Global interior dynamics when $\gamma < \lambda^2/3$.}
\begin{enumerate}
\item[(i)] All interior orbits end at $\mathrm{S}$.
\item[(ii)] A single interior orbit originates from $\mathrm{FL}$.
\item[(iii)] A 1-parameter set of interior orbits originate from
$\mathrm{K}_-$.
\item[(iv)] When $\lambda < \sqrt{6}$ there is also a 1-parameter set
of interior orbits that originates from $\mathrm{K}_+$, while a single
interior orbit originates from $\mathrm{P}$.
\item[(v)] When $\lambda \geq \sqrt{6}$ all interior orbits apart from the
single one from $\mathrm{FL}$ originate from $\mathrm{K}_-$.
\end{enumerate}
\end{theorem}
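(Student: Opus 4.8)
The plan is to mirror the structure of the proof of Theorem~\ref{TheoremI}, but now constructing a monotonic function adapted to the scaling fixed point $\mathrm{S}$, which is the global sink when $\gamma<\lambda^2/3$. Since $\mathrm{S}$ carries $\Omega_\mathrm{pf}=1-3\gamma/\lambda^2>0$ and $\Sigma_\varphi=\sqrt{3/2}\,(\gamma/\lambda)$, both $\Omega_V$ and $\Omega_\mathrm{pf}$ are nonzero there, so the function must be built to attain its extremum at an interior point rather than on the boundary. First I would seek a Lyapunov-type candidate of the form $M_\mathrm{S}\equiv \Omega_V^{p}\,\Omega_\mathrm{pf}^{q}\,(1-u\Sigma_\varphi)^{r}$ (or a closely related combination using $\gamma_\varphi-\gamma$), with exponents $p,q,r$ chosen using the methods of~\cite{heiugg10,heretal10} so that $M_\mathrm{S}^\prime$ has a fixed sign and vanishes precisely at $\mathrm{S}$. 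Computing the logarithmic derivative from~\eqref{exppf}, \eqref{OmegaExpBound} and the $\Omega_V^\prime$ equation reduces the monotonicity requirement to showing that a quadratic (or simple rational) form in $(\Sigma_\varphi,\Omega_\mathrm{pf})$ is sign-definite on the interior, which I expect to factor through the combination $(\Sigma_\varphi-\sqrt{3/2}\,\gamma/\lambda)$ and $(\Omega_\mathrm{pf}-(1-3\gamma/\lambda^2))$.

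With such an $M_\mathrm{S}$ in hand, monotonicity immediately forbids interior periodic orbits and heteroclinic cycles, and the $\alpha$- and $\omega$-limit sets of every interior orbit must lie on the boundary of the interior region or at interior fixed points. Combined with the local analysis of Table~\ref{tab:fixed points2}, this forces the $\omega$-limit of each interior orbit to be $\mathrm{S}$, the unique interior sink in regions II and III, giving part~(i). For the past asymptotics I would again track the behaviour of $M_\mathrm{S}$ (or revert to the region-I function $M$, which still decreases and blows up toward the past so that $\Omega_V\to0$) to conclude that all interior orbits emanate from the $\Omega_V=0$ or $\Omega_\mathrm{pf}=0$ boundary, and then read off which boundary fixed points can actually serve as sources from the eigenvalue signs.

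The remaining parts~(ii)--(v) are then a bookkeeping exercise on the boundary and unstable manifolds, dictated by the stability classification already established. Part~(ii) follows because $\mathrm{FL}$ is a saddle whose unique unstable direction points into the interior. Part~(iii) follows from $\mathrm{K}_-$ being a source for all parameter values. For~(iv), when $\lambda<\sqrt{6}$ (region~II) the point $\mathrm{K}_+$ is still an unstable node contributing a $1$-parameter family, while $\mathrm{P}$ has become a saddle whose single unstable orbit enters the interior (exactly the behaviour noted in the bifurcation discussion of the $\mathrm{I}/\mathrm{II}$ boundary). For~(v), when $\lambda\geq\sqrt{6}$ (region~III) the point $\mathrm{K}_+$ has turned into a saddle with no orbit entering the interior and $\mathrm{P}$ has left the state space through $\mathrm{K}_+$, so $\mathrm{K}_-$ is the only interior source apart from the lone $\mathrm{FL}$ orbit; the boundary case $\lambda=\sqrt{6}$ is covered by the center manifold analysis of $\mathrm{K}_+/\mathrm{P}$, which showed that no interior orbits converge to or originate from that bifurcation point.

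I expect the main obstacle to be the explicit construction of the monotone function $M_\mathrm{S}$ and, in particular, verifying that the resulting expression for $M_\mathrm{S}^\prime$ is strictly sign-definite on the whole open interior (not merely near $\mathrm{S}$) and degenerates only at $\mathrm{S}$ itself. As in the $\gamma=\lambda^2/3$ case of Theorem~\ref{TheoremI}, the function may only be \emph{weakly} monotone along an interior line where the quadratic factor vanishes, in which case I would need the analogue of the inflection-point argument in~\eqref{Mppp}: checking a higher derivative to show that such a line contains no invariant interior subset, so that strict monotonicity holds along every orbit. Handling the spiral subcase $\lambda^2>24\gamma^2/(9\gamma-2)$, where $\mathrm{S}$ is a focus rather than a node, is precisely where a genuine monotone function (as opposed to linearization) is indispensable, since it rules out the periodic orbits that a spiral sink could otherwise permit.
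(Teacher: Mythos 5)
Your proposal follows essentially the same route as the paper: the authors take exactly your ansatz $M=(1-v\Sigma_\varphi)^{2}\Omega_V^{a-1}\Omega_\mathrm{pf}^{-a}$ with $v=\sqrt{3/2}\,(\gamma/\lambda)$ and $a=2(\lambda^2-3\gamma)/(2\lambda^2-3\gamma^2)$, obtain $M'\propto -(\Sigma_\varphi-v)^2 M\le 0$, and handle the degeneracy on the line $\Sigma_\varphi=v$ with precisely the third-derivative inflection-point argument you anticipate, before reading off the past asymptotics from $M\to\infty$ (both exponents negative) and the boundary fixed-point analysis. The plan is correct and matches the paper's proof in structure and in all the key steps, including the role of monotonicity in excluding periodic orbits around the spiral sink.
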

\begin{remark}
For a visual representation of the global orbit structure, see Figures~\ref{fig:globalII}
and~\ref{fig:globalIII}.
\end{remark}
\begin{proof}
Using the methods developed in ch. 10 in~\cite{waiell97} and~\cite{heiugg10,heretal10}
result in the following monotonic function when $\mathrm{S}$ is stable:
\begin{equation}
M \equiv (1 - v\Sigma_\varphi)^2\Omega_V^{a-1}\Omega_\mathrm{pf}^{-a}, \qquad v \equiv \sqrt{\frac32}\left(\frac{\gamma}{\lambda}\right),
\qquad a \equiv \frac{2(\lambda^2-3\gamma)}{2\lambda^2-3\gamma^2},
\end{equation}
where $0<v<1$ is the value of $\Sigma_\varphi$ at the fixed point $\mathrm{S}$,
and where we note that $0<a<1$ when $\gamma<\lambda^2/3$; thus, the two exponents of $\Omega_V$
and $\Omega_\mathrm{pf}$ are thereby negative. The $e$-fold time derivative
of $M$ is given by
\begin{equation}
M^\prime = -\frac{3(2-\gamma)(\Sigma_\varphi-v)^2 M}{(1-v^2)(1 - v\Sigma_\varphi)}\leq 0,
\end{equation}
where $M^\prime = 0$ when $\Sigma_\varphi=v$, but then $M''|_{\Sigma_\varphi=v}=0$ and
\begin{equation}
M'''|_{\Sigma_\varphi=v} = -\frac{27(2-\gamma)[2v^2 - \gamma(1-\Omega_\mathrm{pf})]^2M|_{\Sigma_\varphi=v}}{2v^2},
\end{equation}
where $M|_{\Sigma_\varphi=v} = (1-v^2)\Omega_V^{a-1}\Omega_\mathrm{pf}^{-a}$,
$\Omega_V = 1 - v^2 - \Omega_\mathrm{pf}$ and hence $M'''|_{\Sigma_\varphi=v} < 0$, except at the fixed point
where $2v^2 - \gamma(1-\Omega_\mathrm{pf})=0$ and $M$ is a constant
%
%
and hence all its derivatives are zero, exemplified by that $M'''|_{\Sigma_\varphi=v}$ with
$\Omega_\mathrm{pf} = 1 - \frac{3\gamma}{\lambda^2}$ inserted yields zero. Thus $M$ just goes through
an inflection point for the interior orbits when $\Sigma_\varphi=v$,
$\Omega_\mathrm{pf}\neq 1 - \frac{3\gamma}{\lambda^2}$.
To summarize: $M$ is monotonically decreasing in the interior state space everywhere except at the fixed point $\mathrm{S}$.
%
%
As a consequence $M$ decreases toward its positive minimum value at $\mathrm{S}$ for all interior orbits,
and thus they all end at $\mathrm{S}$. Furthermore, toward the past $M\rightarrow\infty$.
Since both $a>0$ and $a-1<0$ this implies that $\Omega_V\rightarrow 0$ or/and $\Omega_\mathrm{pf}\rightarrow 0$
toward the past for all interior orbits. It follows from the orbits on the boundaries
$\Omega_V=0$ and $\Omega_\mathrm{pf}=0$, and the local fixed point analysis, that there is one orbit entering the
state space from the fixed point $\mathrm{FL}$ and one from $\mathrm{P}$ when $\lambda < \sqrt{6}$.
The remaining interior orbits originate from the fixed point
$\mathrm{K}_-$ only when $\lambda \geq \sqrt{6}$ and from both $\mathrm{K}_-$ and $\mathrm{K}_+$
when $\lambda < \sqrt{6}$, each yielding  1-parameter set of orbits.
\end{proof}
%

%
%

This establishes the global solution structure for the present models for all $\lambda\geq 0$ and
$\gamma\in(0,2)$.\footnote{For those that are so inclined, it is possible to formalize the
above monotonicity arguments further with the \emph{Monotonicity Principle}.
The version on p. 103 in~\cite{waiell97} is as follows:
Let $\phi_t$ be a flow on $\mathbb{R}$ with $S$ an invariant set.
Let $M: S \rightarrow \mathbb{R}$ be a $C^1$ function whose range is the interval $(a,b)$, where
$a\in \mathbb{R}\cup\{-\infty\}$, $b\in \mathbb{R}\cup\{+\infty\}$ and $a<b$. If $M$ is decreasing on orbits in $S$, then
for all ${\bf x}\in S$, the $\omega$- and $\alpha$-limit sets of orbits in $S$ are given by
$\omega({\bf x}) \subseteq \{{\bf s}\in \bar{S} \backslash S| \lim_{{\bf y}\rightarrow {\bf s}} M({\bf y}) \neq b \}$ and
$\alpha({\bf x}) \subseteq \{{\bf s}\in \bar{S}\backslash S| \lim_{{\bf y}\rightarrow{\bf s}}M({\bf y}) \neq a \}$, respectively.
A more advanced version of the monotonicity principle can be found in Appendix D in~\cite{fjaetal07}.}
The solution structure is depicted in Figure~\ref{FigET_ESFB_alpha7_3_3}.
\begin{figure}[ht!]
	\begin{center}
		\subfigure[$\gamma \geq \lambda^2/3$]{\label{fig:globalI}
			\includegraphics[width=0.3\textwidth]{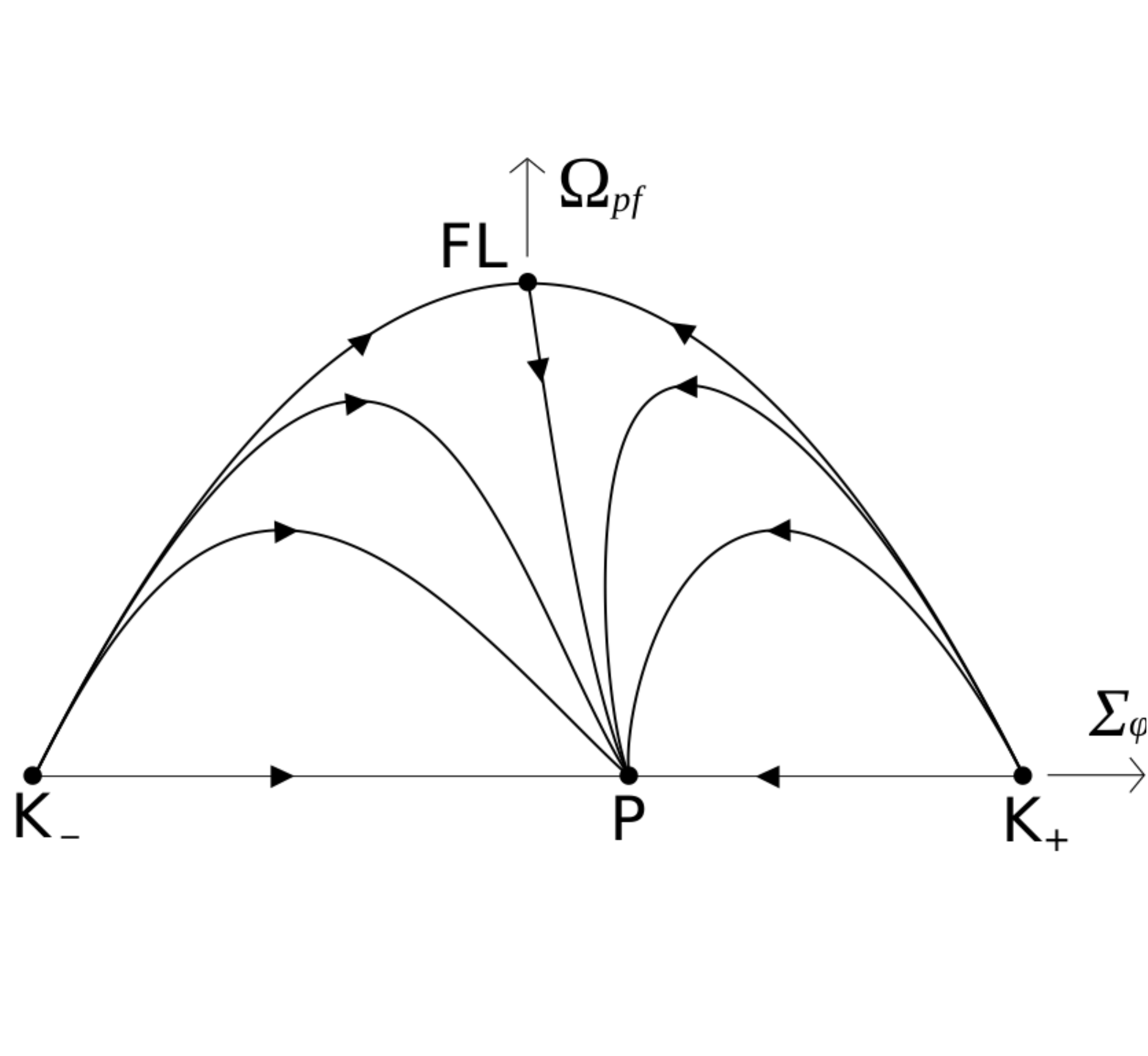}}
				\hspace{0.2cm}
		\subfigure[$\gamma<\lambda^2/3$, $\lambda<\sqrt{6}$]{\label{fig:globalII}
			\includegraphics[width=0.3\textwidth]{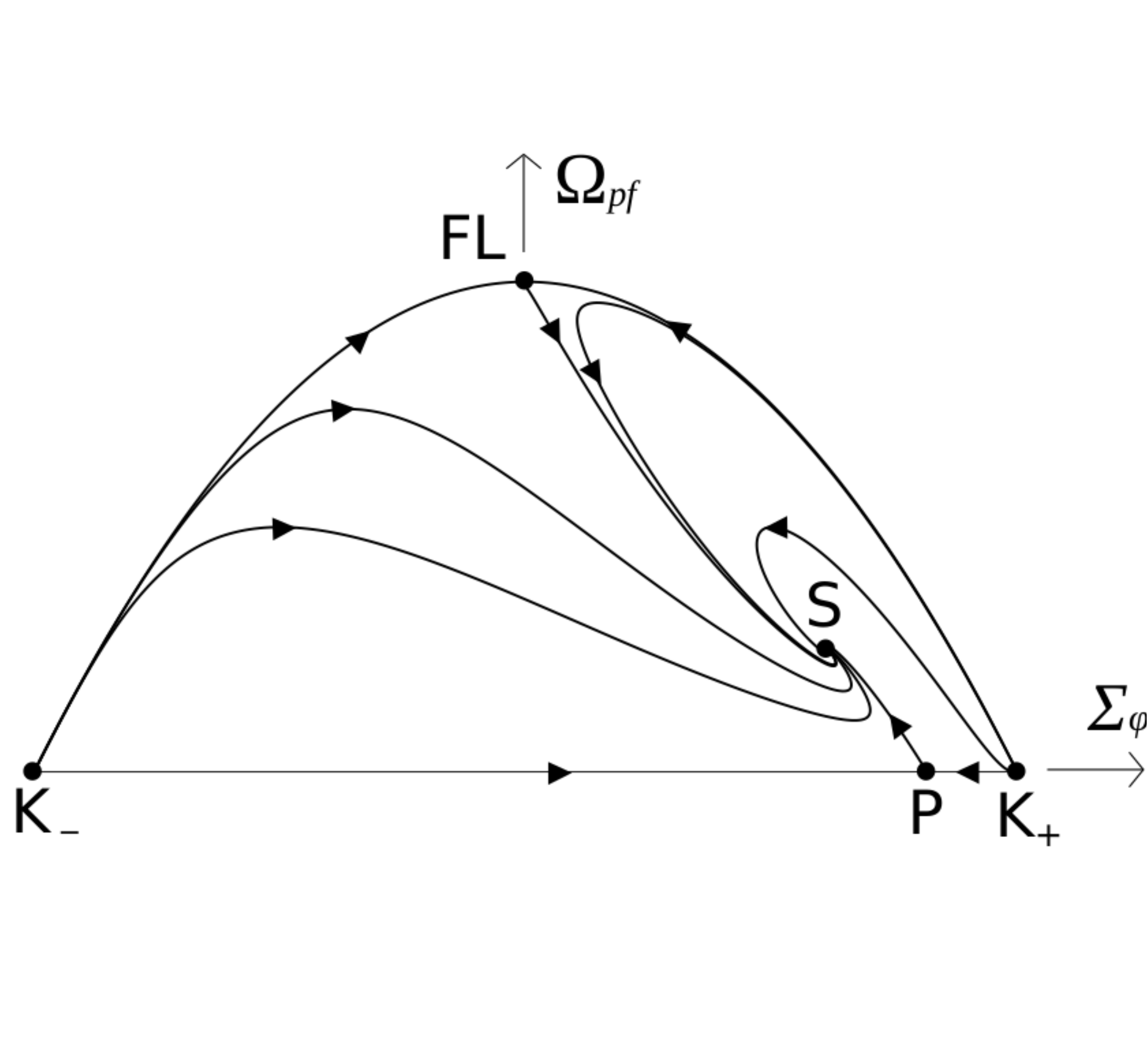}}
		\hspace{0.2cm}
				\subfigure[$\lambda \geq \sqrt{6}$]{\label{fig:globalIII}
				\includegraphics[width=0.30\textwidth]{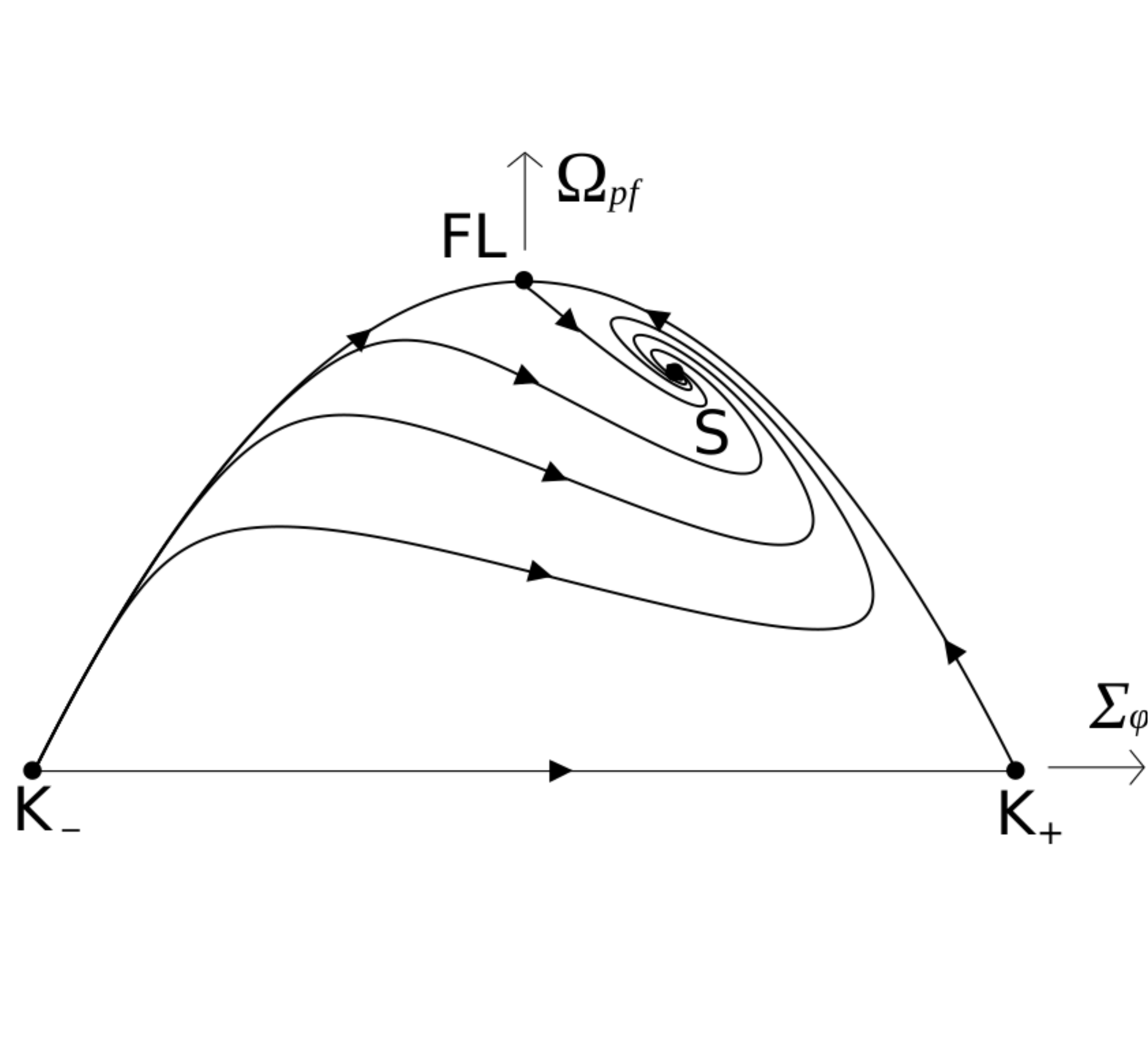}}
		\vspace{-0.5cm}
	\end{center}
	\caption{Orbit structure for a perfect fluid with linear equation of state $p_\mathrm{pf} = (\gamma-1)\rho_\mathrm{pf}$
     and an exponential/constant scalar field potential, $V = V_0e^{-\lambda\varphi}$ for the three bifurcation regions I, II, III
     and their mutual boundaries at $\gamma=\lambda^2/3$, where $\mathrm{P}$ and $\mathrm{S}$ coincide,
     and $\lambda=\sqrt{6}$, where $\mathrm{P}$ coincide with $\mathrm{K}_+$.
     }
	\label{FigET_ESFB_alpha7_3_3}
\end{figure}
However, we here offer an alternative proof using a Dulac function instead of monotonic functions.

\subsection{The Dulac function}

In contrast to the previous direct proofs, the present one relies on the Poincar\'e-Bendixson theorem.

\begin{theorem}\label{TheoremIII}
{\bf Global interior dynamics when $\gamma\in(0,2)$.}
\begin{enumerate}
The global dynamics of the dynamical system~\eqref{exppf} can be inferred from
the local stability analysis of the fixed points.
\end{enumerate}
\end{theorem}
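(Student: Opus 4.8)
The plan is to combine the Poincar\'e--Bendixson theorem with the Bendixson--Dulac criterion. Since the closed state space $\{\Sigma_\varphi^2+\Omega_\mathrm{pf}\leq 1,\ \Omega_\mathrm{pf}\geq 0\}$ is compact and the curves $\Omega_V=0$ and $\Omega_\mathrm{pf}=0$ are invariant, the open interior is a flow-invariant, simply connected region, and every interior orbit has nonempty compact $\omega$- and $\alpha$-limit sets inside the closed state space. By Poincar\'e--Bendixson each such limit set is a fixed point, a periodic orbit, or a heteroclinic cycle; the task is to exclude the latter two, so that the limit sets are forced to be the fixed points already classified in Tables~\ref{tab:fixed points1} and~\ref{tab:fixed points2}.

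First I would rule out periodic orbits with a Dulac function. Exploiting the two invariant boundaries, I would try the ansatz $B=\Omega_V^{\alpha}\Omega_\mathrm{pf}^{\beta}$ and compute the divergence of $B(\Sigma_\varphi^\prime,\Omega_\mathrm{pf}^\prime)$. The logarithmic identities $\Omega_V^\prime/\Omega_V = 2(1+q-\sqrt{3/2}\,\lambda\Sigma_\varphi)$ and $\Omega_\mathrm{pf}^\prime/\Omega_\mathrm{pf}=2(1+q)-3\gamma$ collapse the computation to a single polynomial in $(\Sigma_\varphi,\Omega_\mathrm{pf})$ whose coefficients are affine in $\alpha,\beta$. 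The choice $\alpha=-1$, $\beta=-\sfrac32$ annihilates the quadratic and linear terms and leaves
\begin{equation}
\nabla\cdot\big(B(\Sigma_\varphi^\prime,\Omega_\mathrm{pf}^\prime)\big) = -\sfrac32(2-\gamma)\,B,\qquad B\equiv\Omega_V^{-1}\Omega_\mathrm{pf}^{-3/2},
\end{equation}
which is strictly negative throughout the interior for $\gamma\in(0,2)$, since there $B>0$ and $2-\gamma>0$. As $B$ is $C^1$ and nonvanishing on the simply connected interior, the Bendixson--Dulac criterion forbids any periodic orbit.

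It then remains to exclude heteroclinic cycles as limit sets. The only interior fixed point is the sink ($\mathrm{S}$ in regions II and III; $\mathrm{P}/\mathrm{dS}$ lies on the boundary in region I), which cannot anchor a cycle, so any candidate cycle must lie on the boundary, where $B$ diverges and the flux argument breaks down. Here I would instead invoke the local analysis: $\mathrm{K}_-$ is a source in every region, so both boundary branches $\Omega_\mathrm{pf}=0$ and $\Omega_V=0$ flow \emph{out} of $\mathrm{K}_-$ with nothing returning; since the boundary is a single topological circle, no consistently oriented closed heteroclinic loop can form on it. Hence no heteroclinic cycle exists, and Poincar\'e--Bendixson forces the $\omega$-limit set of every interior orbit to be a single fixed point, necessarily the unique attractor identified locally. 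Applying the identical reasoning to the time-reversed flow---where the divergence merely flips sign and stays definite, and the forward sink becomes a backward source---pins down the $\alpha$-limit sets and recovers the source classification, so the global dynamics is read off from the local stability analysis.

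The main obstacle I anticipate is twofold. First, producing the Dulac function: the sign-definiteness hinges on the cancellation that turns the $\Omega_V$- and $\Omega_\mathrm{pf}$-weighted divergence into a pure multiple of $B$, and one must check that the exponents work uniformly for all $\gamma\in(0,2)$. Second, recognizing that the Dulac criterion alone does not dispatch heteroclinic cycles on the boundary, so the qualitative stability of $\mathrm{K}_-$ (and, in backward time, of the attractor) has to be brought in to close the argument.
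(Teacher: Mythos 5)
Your proposal is correct and follows essentially the same route as the paper: the same Dulac function $\Omega_V^{-1}\Omega_\mathrm{pf}^{-3/2}$ (arrived at by the same boundary-adapted ansatz), the same divergence $-\sfrac32(2-\gamma)B$, and the same appeal to the Poincar\'e--Bendixson theorem after excluding boundary heteroclinic cycles. If anything, your explicit argument that $\mathrm{K}_-$ being a source on the topological circle forming the boundary forbids any closed heteroclinic loop is slightly more detailed than the paper's bare assertion that no such cycles exist.
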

%
%
\begin{proof}
The proof relies on the existence of a Dulac function and the simple boundary structure.
Let $(\Sigma_\varphi^\prime,\Omega_\mathrm{pf}^\prime) = (f_\Sigma,f_\Omega) = \vec{f}$, where
$f_\Sigma$ and $f_\Omega$ are the right hand sides of~\eqref{SigmaExpBound}
and~\eqref{OmegaExpBound}, respectively. In contrast to monotonic functions we do not have
a systematic method for obtaining Dulac functions. However, since they are less restrictive
for the dynamics it seems natural to make the ansatz $\Omega^{-a}_V \Omega_\mathrm{pf}^{-b}$.
This gives
\begin{equation}
\nabla\cdot(\Omega^{-a}_V \Omega_\mathrm{pf}^{-b}\vec{f}) =
\Omega^{-a}_V \Omega_\mathrm{pf}^{-b} \left[\frac32(5-2a-2b)(\gamma\Omega_\mathrm{pf}+2\Sigma_\varphi^2) + \sqrt{6}\lambda(a-1)\Sigma_\varphi -3(1+\gamma-b\gamma) \right],
\end{equation}
where $\nabla = (\partial_{\Sigma_\varphi},\partial_{\Omega_\mathrm{pf}})$.
Choosing $a=1$ to eliminate the linear $\Sigma_\varphi$ term and then $b=3/2$ to
eliminate the $\Omega_\mathrm{pf}$ and $\Sigma_\varphi^2$ terms  yield
\begin{equation}
\nabla\cdot(\Omega^{-1}_V \Omega_\mathrm{pf}^{-3/2}\vec{f}) = -\tfrac32(2-\gamma) \Omega^{-1}_V \Omega_\mathrm{pf}^{-3/2} < 0, \qquad
\gamma\in(0,2),
\end{equation}
i.e., the divergence of $\Omega^{-1}_V \Omega_\mathrm{pf}^{-3/2}\vec{f}$ is strictly
negative in the interior state space when $\gamma \in(0,2)$.
The function $\Omega^{-1}_V \Omega_\mathrm{pf}^{-3/2}$
is thereby a Dulac function, which excludes the possibility of interior periodic orbits,
see e.g. p. 265 in~\cite{per00}. In combination with that there are no heteroclinic
cycles on the boundary, it follows from the Poincar\'e-Bendixson theorem on $\mathbb{R}^2$ (see e.g.~\cite{per00} section 3.7) that the
only possible $\alpha$- and $\omega$-limit sets of the orbits
(i.e., their future and past asymptotics)
of the system~\eqref{exppf} are the fixed points, i.e.,
the local analysis of the fixed points completely describes the
future and past asymptotics of the dynamical system~\eqref{exppf}.
\end{proof}
%


In conclusion: In contrast to earlier work, we have performed a complete 
local and global analysis on FLRW models with a minimally coupled scalar field 
with an exponential potential and a perfect fluid with a linear equation of state. 
This has been accomplished by complementing previous linear analysis of hyperbolic
fixed points with a center manifold analysis of non-hyperbolic fixed points, 
associated with bifurcations, and by constructing monotonic functions and a Dulac 
function that subsequently were used to give a global description
of the dynamics of these models. The present analysis and methods serve as a 
starting point for investigations of models with more general asymptotically 
exponential potentials, and also for inhomogeneous perturbations of
such models, topics that we will come back to in future papers.

\subsection*{Acknowledgments}
It is a pleasure to thank John Wainwright for many useful discussions and for suggesting that we write
the present paper. A. A. is supported by FCT/Portugal through CAMGSD, IST-ID, projects UIDB/04459/2020 and UIDP/04459/2020.

\bibliographystyle{unsrt}
\bibliography{../Bibtex/cos_pert_papers}

\end{document}